\begin{document}

\title{Tokenized Data Markets}
\author{Bharath Ramsundar \and Roger Chen \and
Alok Vasudev \and Rob Robbins \and Artur Gorokh}

\institute{Computable\\
\email{bharath@computable.io},\\
\email{roger@computable.io},\\ 
\email{alok@computable.io},\\ 
\email{rob@computable.io},\\ 
\email{artur@computable.io}\\ 
}
%\author{Bharath Ramsundar \and Roger Chen \and Alok Vasudev \and Rob Robbins \and Artur Gorokh}
%\institute{Computable \email{bharath@computable.io} \and Computable  \email{roger@computable.io} \and Computable  \email{alok@computable.io} \and Computable  \email{rob@computable.io} \and Computable \email{artur@computable.io}}

\maketitle

\begin{abstract}
We formalize the construction of decentralized data markets by introducing the mathematical construction of tokenized data structures, a new form of incentivized data structure. These structures both specialize and extend past work on token curated registries and distributed data structures. They provide a unified model for reasoning about complex data structures assembled by multiple agents with differing incentives. We introduce a number of examples of tokenized data structures and introduce a simple mathematical framework for analyzing their properties. We demonstrate how tokenized data structures can be used to instantiate a decentralized, tokenized data market, and conclude by discussing how such decentralized markets could prove fruitful for the further development of machine learning and AI.
\end{abstract}

\section{Introduction}

Data markets connect buyers and sellers of datasets with one another. Such markets may prove a fundamental new primitive for the next stage of the internet, especially as machine learning and AI systems continue to embed themselves at the heart of the modern technology ecosystem. Learning methods are often data hungry, and require access to large datasets in order to make accurate predictions. Unfortunately, such datasets are nontrivial to gather, and existing data markets lack liquidity. Only the largest and most connected organizations have the resources to secure access to the data they require. The construction of liquid data markets would fundamentally shift this distribution of power and facilitate the broad adoption of machine learning methods.

How can such a data market be constructed? One option is to identify a trusted entity to act as a centralized data broker. Such a broker could enable transactions between buyers and sellers of data by storing datasets on-site and transferring them upon payment. Unfortunately, such a model creates a heavy burden of trust; how can buyers and sellers know that the broker is behaving fairly? Centralized cryptocurrency exchanges already have a checkered history of fraud and theft. It seems all too likely a centralized data exchange could fall prey to similar problems. For these reasons, the construction of a \textit{decentralized} data exchange could prove an enabling technology for liquid data markets. Such an exchange would facilitate transactions of data between buyers and sellers without the need for a trusted third-party broker. Furthermore, tokenization of data offers a powerful new primitive for solving cold-start problems that generally make boostrapping a marketplace difficult. While many might agree that pooling data creates non-zero sum value for all participants, most hesitate to be the first to contribute without some contractual guarantee of value. With decentralized data markets, the earliest contributors see financial incentive because they can receive tangible cryptoeconomic assets (tokens) even before buyers enter the market.

The construction of a decentralized data exchange is not straightforward. How can participants ensure that their datasets are stored and transferred correctly? How can cheaters be caught and removed from the system? These are deep questions which delve into the heart of multiparty protocols. Luckily, the advent of blockchain based systems with associated smart contract platforms \cite{buterin2013ethereum} has triggered significant research into the design of multi-agent systems designed to perform nontrivial work. For example, prediction markets \cite{peterson2015augur}, decentralized token exchanges \cite{warren20170x}, curation markets \cite{delarouviere2017curationmarkets}, token curated registries \cite{goldin2017tcr}, storage markets \cite{wilkinson2014storj}, and computational markets \cite{teutsch2017scalable} provide various examples of systems designed to perform useful work by coordinating selfish actors. Primitives introduced by such protocols can be repurposed to serve as a foundation for decentralized data markets.

The token curated registry (TCR) \cite{goldin2017tcr} in particular provides a powerful abstraction for how a collection of participants can work together to build a curated list. For example, such a list could contain the names of colleges which enable students to rapidly pay back student debt after graduation. Basic implementations of TCRs in Solidity already exist \cite{goldinTcrImpl}. However these implementations have a number of limitations. For example, storage is typically on-chain for simplicity; this basic design wouldn't permit for the construction of a list of images since images are too large to be stored on existing smart contract platforms. In addition, the contents of the registry are publicly visible, so sensitive information can't be assembled. 

To overcome these issues, it proves useful to specialize the basic design of token curated registries to fit within a structured framework which explicitly allows for off-chain storage and private data. In addition, we introduce the new notion of recursively nesting TCRs to allow for the construction of more complex data structures. We call this modified mathematical class of structures \textit{tokenized data structures}. Tokenized data structures allow for a number of improvements over simple on-chain TCR implementations:

\begin{itemize}
\item \textbf{Off-chain storage}: At present, simple token curated registries cannot hold large datasets since the registry contents are stored on-chain. Tokenized data structures on the other hand allow for the storage of data elements which may be too large to fit on-chain. Such data elements could be stored on IPFS \cite{benet2014ipfs} or similar storage networks. Enabling off-chain storage significantly extends the types of data structures that can be constructed. A decentralized data exchange could store all its datasets off-chain in this fashion. Alternatively, a tokenized map could be constructed to provide an alternative to Google maps. Note that such a data exchange or tokenized map would require the storage of terabytes and perhaps petabytes of data. Coordination mechanisms that enable a tokenized data structure to effectively access distributed off-chain state will prove fundamental for these applications. We discuss such mechanisms later in this work.
\item \textbf{Private Data}: Decentralized data exchanges will require that only the rightful owners of datasets be able to access data. For this reason, data indexed in the tokenized data structure must be kept private. Similarly, the tokenized map introduced above could have regions of the map restricted to the general public (say for military bases), or the map could cover private property; a token curated Disneyland map may require a payment to Disney in order to access. Tokenized data structures need to allow private data to be maintained as part of its structure. Agents who wish to access data must purchase \textit{membership} in order to access such data. In a decentralized data exchange, buyers of data must purchase membership in the data in order to access.
\item \textbf{Recursive Nesting}: Some tokenized data structures could require significant capital expenditure to construct. For the case of a map dataset, it’s possible that mapping a new city might require a mapper to expend capital gathering the mapping information needed to add a new entry to a tokenized map with existing token $\mathcal{T}$. Let's suppose that our mapper lacks the needed funds, but has an entrepreneurial mindset. For this purpose, the mapper can construct a new city token $\mathcal{CT}$ which she can use to fund her data gathering efforts. This token is tied to the broader map token $\mathcal{T}$ so that our mapper doesn't need to exit the existing mapping ecosystem. The mapper can sell a fraction of her founder $\mathcal{CT}$ tokens to obtain the funds necessary to gather the first maps for the new city. In order to attract investors to $\mathcal{CT}$, there must be mechanisms by which $\mathcal{CT}$ token holders can obtain rights to future monetary returns from the new city map. We introduce mathematical structures, namely a membership model, that provide these returns. 
\end{itemize}

We start by reviewing the literature for related ideas, then proceed to provide a number of practical examples of tokenized data structures, culminating with the construction of a decentralized data market via a tokenized data structure. We then use these examples to motivate a mathematical framework for analyzing tokenized data structures, and prove some basic economic theorems governing their behavior. We discuss how decentralized data markets may enable the advancement of machine learning and AI, and conclude by highlighting a few open problems relating to tokenized data structures.

\section{Related Work}

Bitcoin \cite{nakamoto2008bitcoin} introduced the first broadly adopted token incentivized scheme. Its proof of work mining algorithm provided an incentive for miners to run large computations in return for token rewards. Despite its impact, Bitcoin does not provide an easy way for developers to build applications on top of the core protocol. Ethereum \cite{buterin2013ethereum} extends the Bitcoin design with a (quasi) Turing complete virtual machine on top \cite{wood2014ethereum} capable of executing smart contracts. A number of smart contract systems have been devised which implement powerful incentive systems such as prediction markets \cite{peterson2015augur}, decentralized exchanges \cite{warren20170x}, and computational markets \cite{teutsch2017scalable}.

Both Bitcoin and Ethereum were originally designed to use proof-of-work (PoW) mining. In such systems, teams of miners compete for the right to propose the next cleared set of transactions by solving computational challenge problems (typically hash inversion). PoW has proven a robust and powerful security mechanism, but at the cost of tremendous electricity and resource consumption. For this reason, a parallel line of work has investigated proof-of-stake (PoS) mining algorithms \cite{king2012ppcoin, kiayias2017ouroboros, buterin2017casper}. Such algorithms require that miners hold "stake" in the form of coins held in the economic system. Miners are selected to propose the next cleared "block" of transactions according to their stake. To keep miners honest, a number of "slashing conditions" \cite{buterin2017casper} have been proposed which punish dishonest miners. Although proof-of-stake was originally envisioned as a scheme for securing blockchains, it has become clear that computational stake serves as a powerful scheme to coordinate agents to perform useful work. Many protocols \cite{peterson2015augur, teutsch2017scalable, goldin2017tcr, goldin2018tcr11} rely upon staking mechanisms to coordinate actors to perform useful work and upon slashing conditions to punish dishonest behavior.

Token curated registries \cite{goldin2017tcr} (TCRs) in particular allow for the construction of lists that are maintained by a set of curators. These curators must be bonded into the TCR by placing tokens at stake. The bonding of curators creates natural incentive structures that help the listing take natural form. The original TCR design was subsequently modified to add "slashing" conditions that punish token holders who don't participate in votes regularly. \cite{goldin2018tcr11}. A number of related designs to TCRs such as curation markets for coordinating agents around shared goals \cite{delarouviere2017curation, delarouviere2017curationmarkets} have also been proposed. Refinements such as bonding curves \cite{delarouviere2017tokens2} have been proposed which allow for additional flexibility in the choice of how participants are rewarded with tokens for their efforts.

It's important to note however that unlike PoW algorithms, PoS methods have not been tested yet with large real world deployments. A line of recent work has demonstrated that long-range attacks \cite{gazi2018stake}, where miners wait until they can remove stake from the system to launch attacks, may seriously compromise the security of such systems. Nevertheless, the flexibility and energy friendliness of PoS systems means that research into the design of systems continues full steam.

A different line of work has investigated distributed hash tables \cite{stoica2003chord, kaashoek2003koorde, maymounkov2002kademlia}, data structures which enable decentralized networks of participants to maintain useful information. Such decentralized data structures form the foundations of modern internet architecture and also feature prominently in the design of many tokenized protocols \cite{wood2014ethereum, wilkinson2014storj}. One way of contextualizing tokenized data structures would be to view them as the blending of ideas from PoS incentive schemes with distributed hash table style decentralized storage. Protocols such as Storj \cite{wilkinson2014storj} and Filecoin \cite{filecoin2017} have explored this design space. Storj proposes a peer-to-peer storage network where availability of data is guaranteed by a challenge response scheme and where storage nodes are rewarded with tokens. The locations of shards of data are stored on an underlying Kademlia distributed hash table \cite{maymounkov2002kademlia}.

Unlike systems such as Storj, tokenized data structures introduce the notion of recursive sub-tokens enabling different agents to construct parts the tokenized data structure. These sub-tokens draw from past work on non-fungible tokens \cite{eip721}, which create custom tokens tied to particular physical or virtual entities. For example, Cryptokitties \cite{cryptokitties2018} associates separate non-fungible tokens to instances of collectible virtual cats (the aforementioned "Cryptokitties").

Tokenized data structures also draw some inspiration from past work on decentralized cryptocurrency exchanges \cite{warren20170x}. However, the needs for a decentralized data exchange to secure large off-chain datasets means that it's not feasible to directly adopt decentralized exchange protocols for data transactions.

\section{Examples of tokenized data structures}

Before introducing a formal mathematical definition of tokenized data structures, it will be useful to discuss a number of different types of tokenized data structures to build intuition. We present a series of tokenized data structures of increasing complexity, culminating in the construction of a decentralized data market. An important design theme that will emerge in this discussion is the recursive nature of tokenized data structures, which means that such structures can be fruitfully combined to build more complicated systems.

\subsection{Distributed Hash Table}

A tokenized data structure with no associated token but with off-chain storage forms a distributed hash table. Assuming that the tokenized data structure is implemented on a smart-contract platform, the lookup table mapping keys to data locations can be implemented as a smart contract data structure stored on-chain as illustrated in Figure~\ref{fig:dht}.

\begin{figure*}
  \centering
  \includegraphics[width=0.75\textwidth]{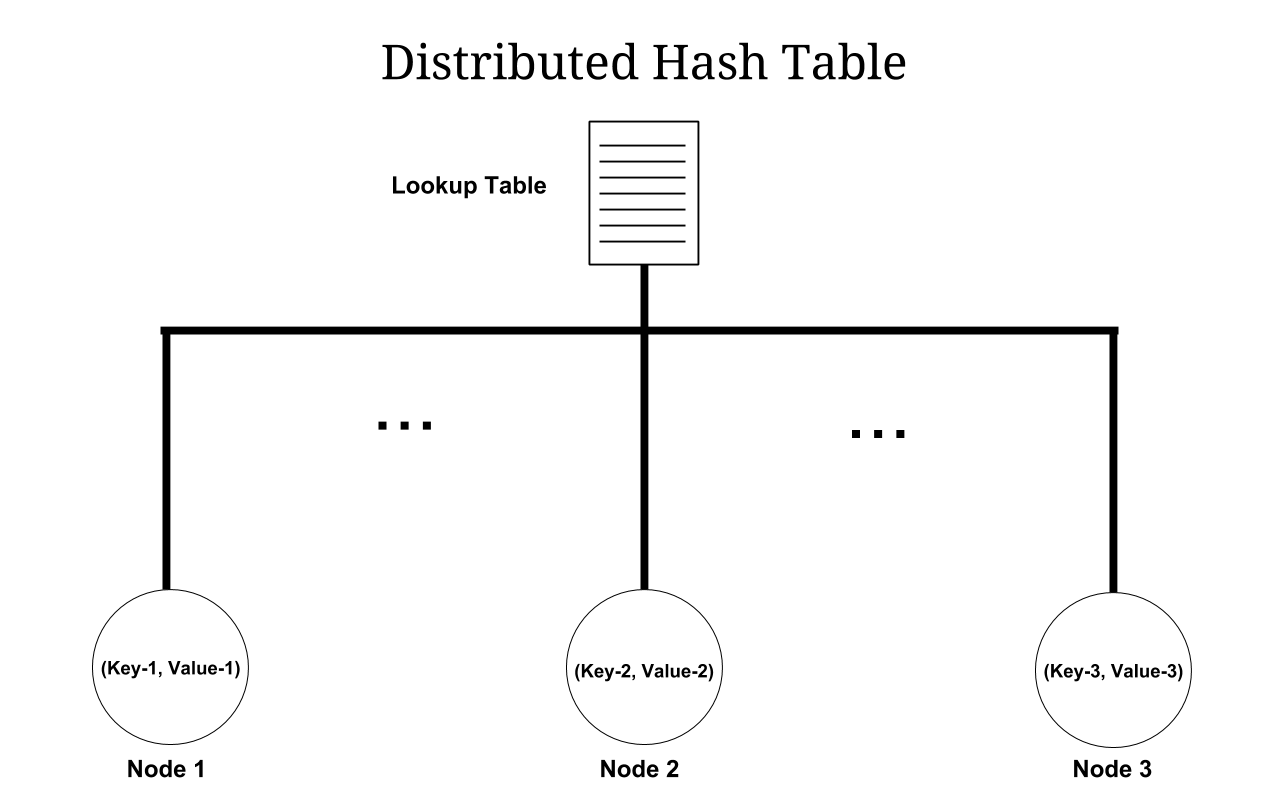}
  \caption{A distributed hash table is a tokenized data structure (with no associated token).}
  \label{fig:dht}
\end{figure*}

\subsection{Token Curated Registry}

A simple token curated registry is a special case of a tokenized data structure with no off-chain storage and no private data (Figure~\ref{fig:thr}). Note that the concept of a token curated registry is often disused quite generally, so it would be equally fair to argue that all tokenized data structures are themselves special cases of token curated registries.

\begin{figure*}
  \centering
  \includegraphics[width=0.75\textwidth]{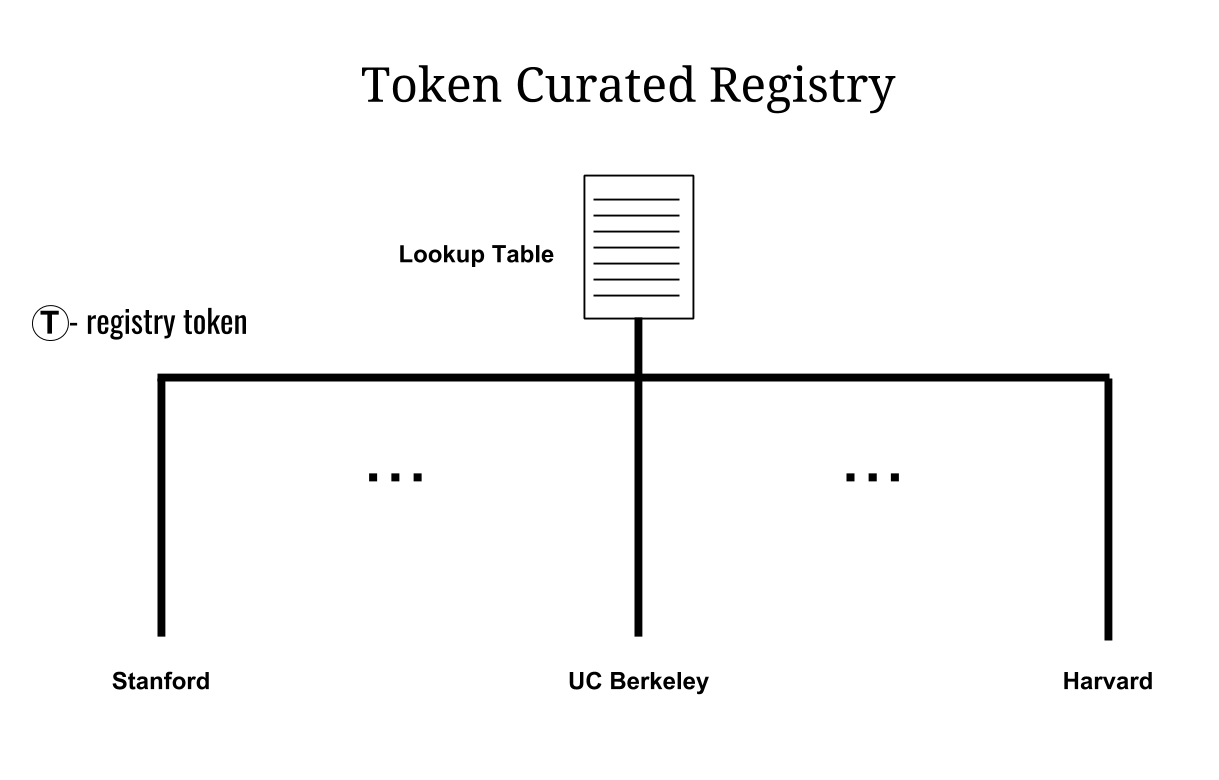}
  \caption{A simple token curated registry is a special case of a tokenized data structure with on-chain storage and public visibility.}
  \label{fig:thr}
\end{figure*}

\subsection{Tokenized Dataset}

A tokenized dataset is a distributed hash table that has an associated token $\mathcal{T}$. Alternatively, the tokenized dataset can be viewed as token curated registry but with the addition of off-chain storage. Figure~\ref{fig:tcdataset} illustrates a token curated image dataset with public data visibility while Figure~\ref{fig:privtcdataset} illustrates a token curated image dataset with private data visibility.

\begin{figure*}[h]
  \centering
  \includegraphics[width=0.75\textwidth]{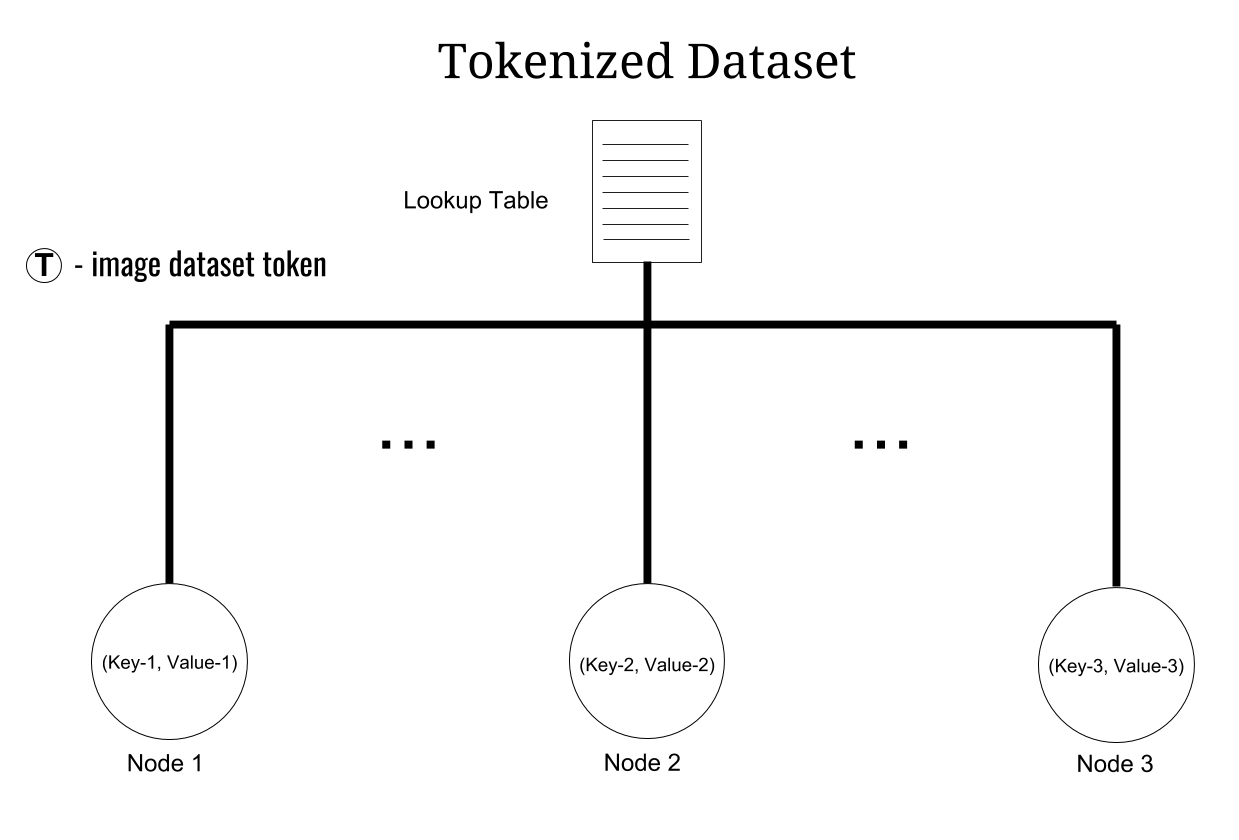}
  \caption{A tokenized dataset. For example, the tokenized dataset might hold an image dataset with the images stored off-chain. Note that the data is set to be publicly visible.}
  \label{fig:tcdataset}
\end{figure*}

\begin{figure*}[h]
  \centering
  \includegraphics[width=0.75\textwidth]{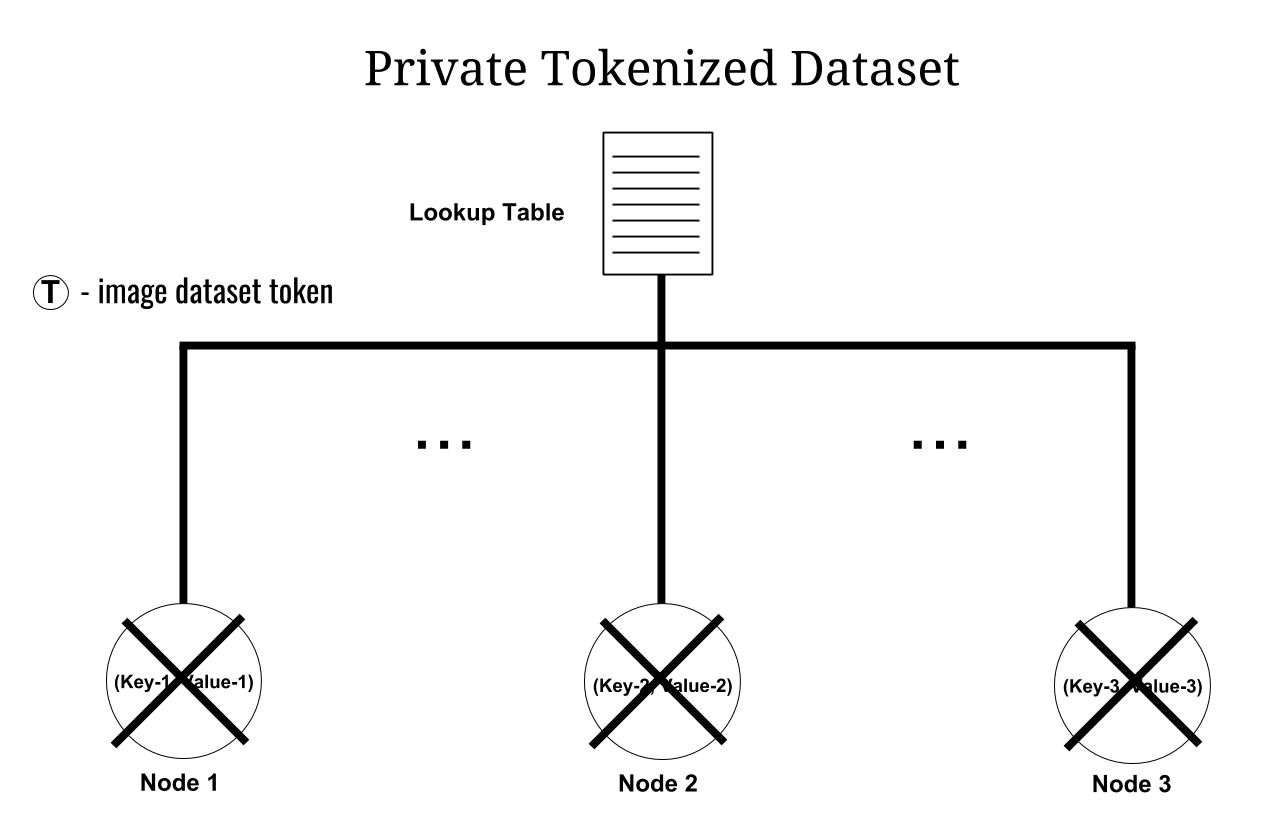}
  \caption{A Private tokenized dataset. In this particular representation, data stored off-chain are not publicly visible.}
  \label{fig:privtcdataset}
\end{figure*}

It's illustrative to imagine how a large image dataset like ImageNet \cite{deng2009imagenet} could have been gathered with a private tokenized dataset rather than through Amazon's Mechanical Turk. Workers who contributed images would be rewarded by being issued tokens of type $\mathcal{T}$ that could be renumerated at a future date for currency.

In the longer run, tokenized datasets may prove to be a far more powerful tool for incentivizing the construction of large datasets than Mechanical Turk. Unlike Mechanical Turk, tokenized datasets have support for recursive sub-tokens which allows workers to be rewarded with a share of future financial rewards from the dataset. This expectation of future rewards is a powerful economic driver. The modern startup functions because founding employees accept severe risks in expectation of future rewards from their fractional ownership of the company. Similarly, tokenized datasets may enable "data startups" which work collaboratively to construct datasets of significantly greater scale and utility than ImageNet.

\subsection{Tokenized Map}

A tokenized map (Figure~\ref{fig:tcm}) is a two dimensional grid with off-chain storage for local information at grid points. A tokenized map could be used to incentivize the construction a version of Google Maps. Local businesses could pay for transactions to add their business information to the tokenized map.

\begin{figure*}
  \centering
  \includegraphics[width=0.75\textwidth]{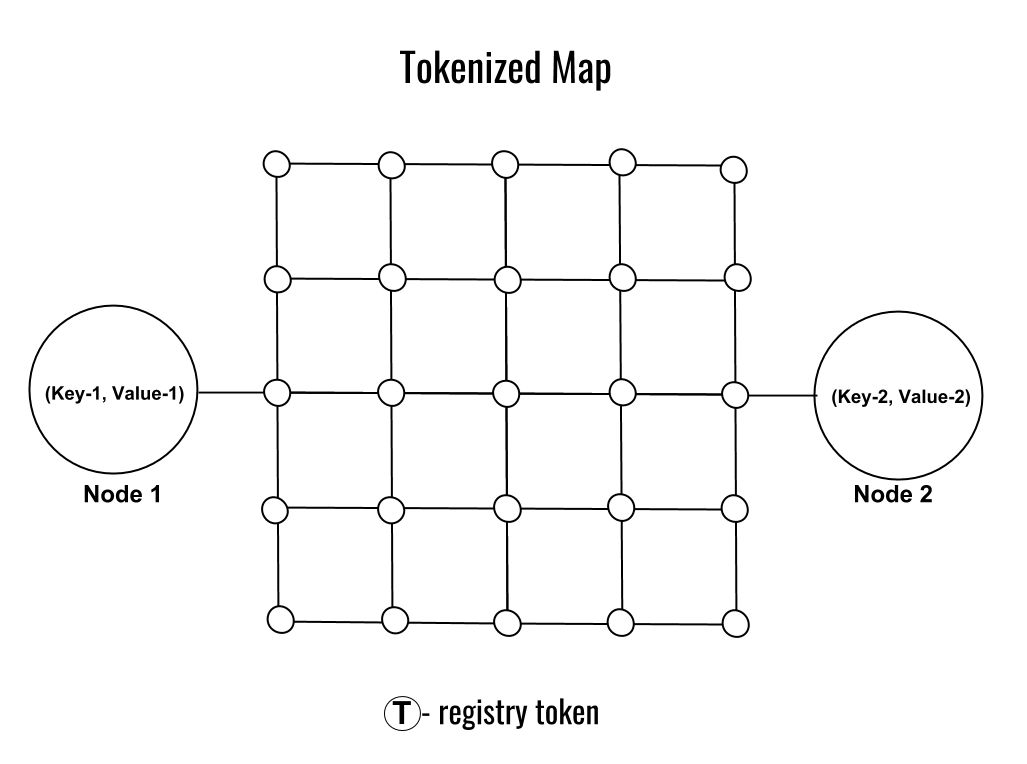}
  \caption{A tokenized map.}
  \label{fig:tcm}
  \caption{A tokenized map could be used to construct a dataset analogous to that owned by Google Maps. Business owners could pay to have their information added to the map.}
\end{figure*}

\subsection{Tokenized Tree}

Let's suppose that adding elements to a tokenized data structure would take significant capital outlay. For example, the tokenized data structure we wish to construct might be a vast phylogenetic tree that holds all the world’s genomic data (Figure~\ref{fig:tctree}). In this case, adding one new individual to the phylogenetic tree could take a sizable sum of money to pay for the needed genetic sequencing. Since there may not exist interested individuals who are willing to directly pay for the construction of this tree, the tokenized tree suffers from a severe cold start problem.

More generally, if there exists a substructure in a tokenized data structure that is difficult to construct, a token can be constructed that is tied to this substructure. For example, let’s suppose that the South Asian branch of the phylogenetic tree is sparse. An interested network participant can contribute genetic material in return for \textit{SouthAsianBranchTokens} (SABTs). If a future agent pays to access the private data on the South Asian branch, payments will be made to SABT holders. The anticipation of these future payments serve as an incentive to encourage contribution of data elements to the South Asian branch. Note that the portions of the South Asian branch must be kept private else there will be no incentive to pay for data access. Conceptually, the SABT holders have a form of ownership in the South Asian branch of the tokenized phylogenetic tree.

Similarly, a \textit{PolynesianBranchTokens} (PBTs) may incentivize gathering of genomic data for the Polynesian branch of the tokenized phylogenetic tree. But, it's important to note that entirely different organizations may be involved with this branch of the tree! That is, SABTs and PBTs may be used by different organizations, with their efforts coordinated by the decentralized tokenized phylogenetic tree. This potential for decentralized coordination of disparate organizations could enable complex datasets to be assembled.

\begin{figure*}
  \centering
  \includegraphics[width=0.75\textwidth]{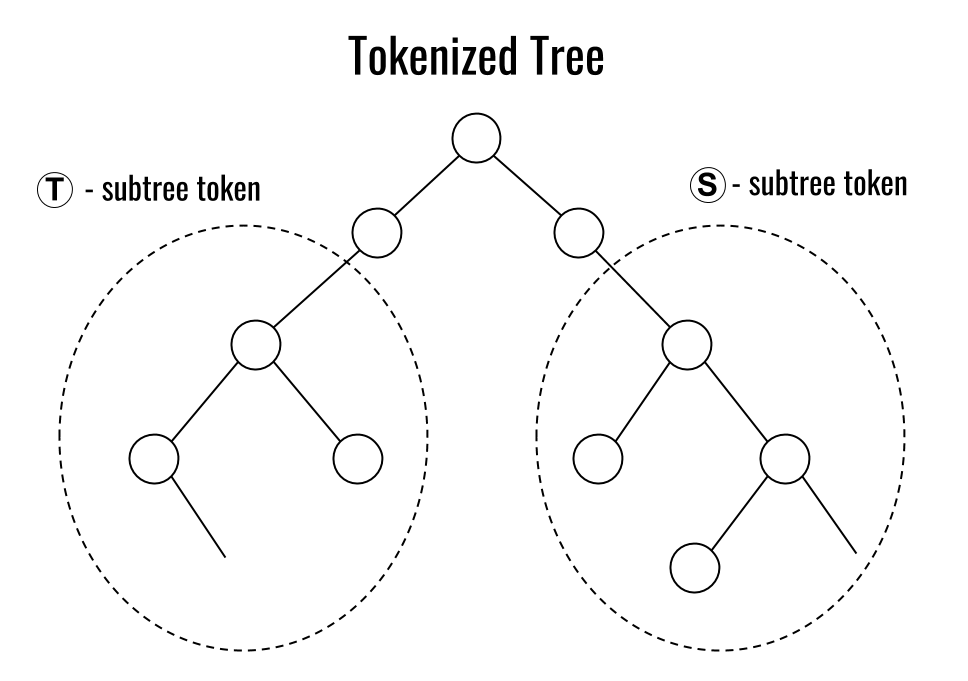}
  \caption{A tokenized tree. Different subtrees are incentivized by different tokens.}
  \label{fig:tctree}
\end{figure*}

\subsection{Decentralized Data Markets}

A decentralized data market would provide data liquidity by enabling data transactions between buyers and sellers of data. How can such a market be instantiated as a tokenized data structure? Luckily, we've already discussed many of the compontent pieces of such a market structure already. Individual datasets can be stored on the market as (private) tokenized datasets. The collection of such tokenized datasets can be organized itself as a simple token curated registry. Put another way, a tokenized data market is defined as a simple token curated registry of (private) tokenized datasets. Figure~\ref{fig:tcdatreg} illustrates a tokenized data market.

A tokenized data market could be used to construct a decentralized data exchange where participants can access various useful types of data by accessing constituent tokenized datasets. Agents would be incentivied to construct new datasets in anticipation of future rewards for token holdings in such datasets via dataset tokens.

\begin{figure*}
  \centering
  \includegraphics[width=\textwidth]{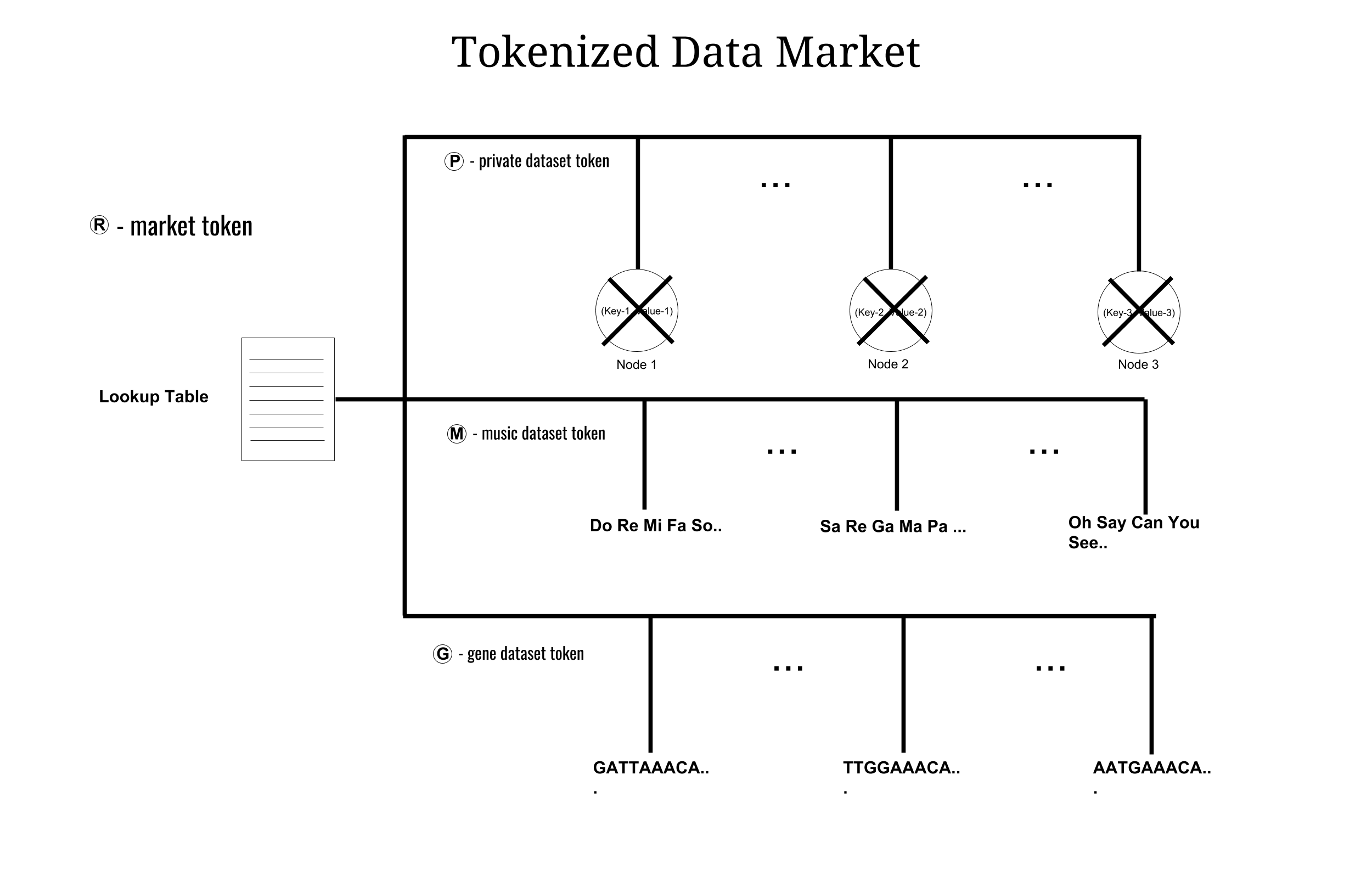}
  \caption{A tokenized data market. Different datasets are incentivized by different tokens.}
  \label{fig:tcdatreg}
\end{figure*}

\section{Mathematical Definitions}

In this section, we provide formal mathematical definitions of tokenized data structures and analyze a number of their mathematical properties. 

\subsection{Formal Definitions}
A tokenized data structure $\mathcal{TD}$ is a collection of elements $e_1,\dotsc, e_N$ with an optional token type $\mathcal{T}$, associated ledger $\mathcal{L}$ that maps agents $a_j$ to their token holdings $t_j$, and metadata $\mathcal{M}$ that annotates elements $e_i$ with additional information. Formally, we can write $\mathcal{TD}$ as a tuple $((e_1,\dotsc,e_N), \mathcal{T}, \mathcal{L}, \mathcal{M})$ (or simply $(e_1,\dotsc, e_N)$ in the token-free, metadata-free case).

Recursively, each element $e_i$ in this may itself be a tokenized data structure $\mathcal{TD}_i = ((e_{i1}, \dotsc, e_{iM}), \mathcal{T}_i, \mathcal{L}_i, \mathcal{M}_i)$.  Alternatively, $e_i$ may be a terminal leaf node containing associated data $v_i$. Note that $v_i$ may be stored on-chain or off-chain, depending on the capacities of the system on which the tokenized data structure is implemented.

For many proofs, it will be useful to talk about the economic value of a particular token. However, such discussions depend on the choice of base currency. Following conventions from the literature \cite{gorbunov2015democoin}, we adopt the notation $\#A$ to denote $A$ units of monetary value.

\begin{definition}[Economy Size]
Let $\mathcal{T}$ denote a token type. Let $n$ denote the number of such tokens tracked in ledger $\mathcal{L}$. Let $\#a$ denote the monetary value of one such token in the base currency and let $\#(\mathcal{T})$ denote the number of tokens of type $\mathcal{T}$ available. Then the size of the token economy $\text{size}(t)$ is defined to be $a\#(\mathcal{T})$ units.
\end{definition}

\subsection{Operations and Parameters}

This section introduces the operations that can be performed on a tokenized data structure and the associated parameters that control the specific behavior of a tokenized data structure under these operations.

There are four classes of operations supported by a tokenized data structure:\textit{candidacy}, \textit{challenges}, \textit{forks}, and \textit{queries}. Candidacy is the process by which new elements are proposed for addition to the tokenized data structure $\mathcal{TD}$. Challenges allow for the legitimacy of elements of $\mathcal{TD}$ to be formally challenged. Forks split $\mathcal{TD}$ into two parts. Queries allow for private elements in the tokenized data structure to be viewed. In the remainder of this section, we expand on these brief definitions and discuss the parameters that govern each operation. Table~\ref{tab:param} summarizes all operations and parameters.

\begin{table}[h!]
\begin{center}
\begin{tabular}{ |c|c|c|c| } 
 \hline
 Operation & Parameter & Symbol & Default Value  \\ 
 \hline
 \hline
 Candidacy & Deposit & $t_\text{candidateDeposit}$ & -  \\
 \hline
 Candidacy & Voting Period & $t_\text{candidateVote}$ & 3 days  \\
 \hline
 Candidacy & Reward  & $S_\text{candidateReward}$ & 1 token \\
 \hline
 Candidacy & Vote Quorum & $S_\text{candidateQuorum}$ & 66.67\%  \\
 \hline
 Candidacy & Reward Stake Period & $t_\text{candidateStake}$ & -  \\
 \hline
 \hline
 Challenge & Deposit & $S_\text{challengeDeposit}$ & -  \\
 \hline
 Challenge & Voting Period & $t_\text{challengeVote}$ & 5 days  \\
 \hline
 Challenge & Reward & $S_\text{challengeReward}$ & 0  \\
 \hline
 Challenge & Vote Quorum & $S_\text{challengeQuorum}$ & 66.67\%  \\
 \hline
 \hline
 Fork & Deposit & $S_\text{forkDeposit}$ & -  \\
 \hline
 Fork & Voting Period & $t_\text{forkVote}$ & 30 days  \\
 \hline
 Fork & Threshold & $S_\text{forkThreshold}$ & 50\%  \\
 \hline
 \hline
 Query & Deposit & $S_\text{query}$ & - \\
 \hline
 \hline
 - & Offchain & $b_\text{offchain}$ & -  \\
 \hline
\end{tabular}
\end{center}
\caption{This table lists the set of parameters that control a given tokenized data structure $\mathcal{TD}$. Note that the three major classes of $\mathcal{TD}$ actions (candidacy, challenge, and fork) are all represented. }
\label{tab:param}
\end{table}

\subsubsection{Candidacy}

Candidacy is the process by which new elements are proposed for addition to a tokenized data structure.

\paragraph{Candidacy Deposit}
The candidate deposit $S_\text{candidateDeposit}$ controls the amount that a token holder must stake in order to propose the addition of an element to the tokenized dataset.

Note that for many practical applications, $S_\text{candidateDeposit}$ may be set to 0 in order to lower barriers for potential candidates to participate in the construction of the tokenized dataset. The danger of setting $S_\text{candidateDeposit} = 0$ is of course that spamming the market becomes much easier.

\paragraph{Candidacy Voting Period}
The candidate voting period $t_\text{candidateVote}$ controls the amount of time that token holders have to vote on a new data candidate.

\paragraph{Candidacy Reward}
The reward $S_\text{candidateReward}$ issued to a candidate for an accepted addition to the tokenized dataset.

\paragraph{Candidacy Vote Quorum}
The percentage of token holders $S_{candidateQuorum}$ who must vote to authorize the addition of a new candidate to the $\mathcal{TD}$.

\subsubsection{Challenge}
Challenges are the mechanism by which token holders can dispute the suitability of a given element for membership in the tokenized data structure. The challenge mechanism allows token holders to remove data structure elements which no longer add value to the global structure.

\paragraph{Challenge Deposit}
$S_\text{challengeDeposit}$ is the amount that a token holder must stake in order to issue a challenge to a particular element in the tokenized dataset.

\paragraph{Challenge Voting Period}
The challenge voting period $t_\text{challengeVote}$ controls the amount of time that a challenge for a particular deposit is open for token holders to vote upon.

\paragraph{Challenge Reward}

The reward $S_\text{challengeReward}$ issued to successful challengers. It is probably appropriate to set $S_\text{challengeReward}$ equal to $0$ since token holders should be incentivized to remove bad entries. However, it might also be reasonable to set $S_\text{challengeReward} = S_\text{candidateReward}$ in which case seized reward $r$ associated with the element $e$ in question is awarded to the challenger.

\paragraph{Challenge Vote Quorum}
The percentage of token holders $S_{challengeQuorum}$ who must vote to authorize the removal of an element from $\mathcal{TD}$.

\subsubsection{Forks}

Forking is the operation by which one tokenized data structure ca nbe split into two tokenized data structures. All of the token holders in the forked structure must pick one of the two structures as legitimate.

\paragraph{Fork Deposit}

The fork deposit $S_\text{forkDeposit}$ controls the amount of stake that must be placed to request a fork of $\mathcal{TD}$.

\paragraph{Fork Voting Period}

The fork voting period $t_\text{forkVote}$ is the amount of time token holders can vote on a proposed fork.

\paragraph{Fork Threshold}

The fork threshold $S_\text{forkThreshold}$ is the amount of votes that must be placed in favor a forking operation to trigger a fork.

\paragraph{Offchain}
If the boolean value $b_\text{offchain}$ is true, then the tokenized data structure has leaf nodes which store information off-chain. Tokenized datasets and tokenized data registries rely fundamentally on off-chain storage for example.

\subsubsection{Query}

Querying is the operation by which stake holders in a tokenized data structure can request to query private data held in leaf nodes of the structure. It's possible to think of a querying operation as a sort of limited challenge operation.

\paragraph{Query Deposit}
Stake $S_\text{query}$ is the amount of stake required to be able to query private data points stored in leaf nodes of the tokenized data structure.

\subsection{Token Issuance Schedule}

The creators of a tokenized data structure $\mathcal{TD}$ have broad flexibility to control token ownership, supply, and issuance. For example, the token economy for $\mathcal{TD}$ could have a fixed supply, be inflationary, or even deflationary depending on the needs of the particular application at hand. In this section, we briefly discuss some potential token allocation strategies.

\subsubsection{Predetermined Allocation}

The creators of $\mathcal{TD}$ could elect to split all tokens amongst themselves in some agreed upon fashion proportional to their expected work contribution. In this case, $\#(\mathcal{T})$ is fixed and does not change over time.

\subsubsection{Mining}

Tokens can be issued in an on-going fashion to contributors of new elements to $\mathcal{TD}$. (The act of contributing a quasi-finalized candidate to $\mathcal{TD}$ is deemed mining.) To enable mining rewards, the creators of $\mathcal{TD}$ need to set $S_\text{candidateReward} > 0$. In this case, $\#(\mathcal{T})$ grows with time.

% Additional strategies?

\subsection{Network Participants}

In this section, we introduce various agents who participate in the construction of a tokenized data structure $\mathcal{TD}$ and the operations they can perform. Table~\ref{tab:agents} lists the three classes of agents: token holders, makers, and queriers. Figure~\ref{fig:tcdiagram} provides a diagrammatic representation of how a tokenized data structure is constructed by agents in the network. Note that the same entity can play multiple roles. W

\begin{table}[h!]
\begin{center}
\begin{tabular}{ |c|c| } 
 \hline
 \textbf{Participant} & \textbf{Properties} \\ 
 \hline
 Token Holder & Agent $a_j$ with token holdings $\mathcal{L}(a_j) = t_j > 0$ \\ 
 \hline
 Maker & Agent $a_j$ with $\mathcal{L}(a_j) > S_{\text{candidate}}$ can submit candidate datapoint $e$ to $\mathcal{TD}$ \\ 
 \hline
 Querier & Agent $a_j$ with $\mathcal{L}(a_j) > S_{\text{query}}$ can query private data \\
 \hline
\end{tabular}
\end{center}
\caption{Agents participating in construction of tokenized data structure $\mathcal{TD}$. Note that makers and queriers are also token holders.}
\label{tab:agents}
\end{table}

\begin{figure*}
  \centering
  \includegraphics[width=\textwidth]{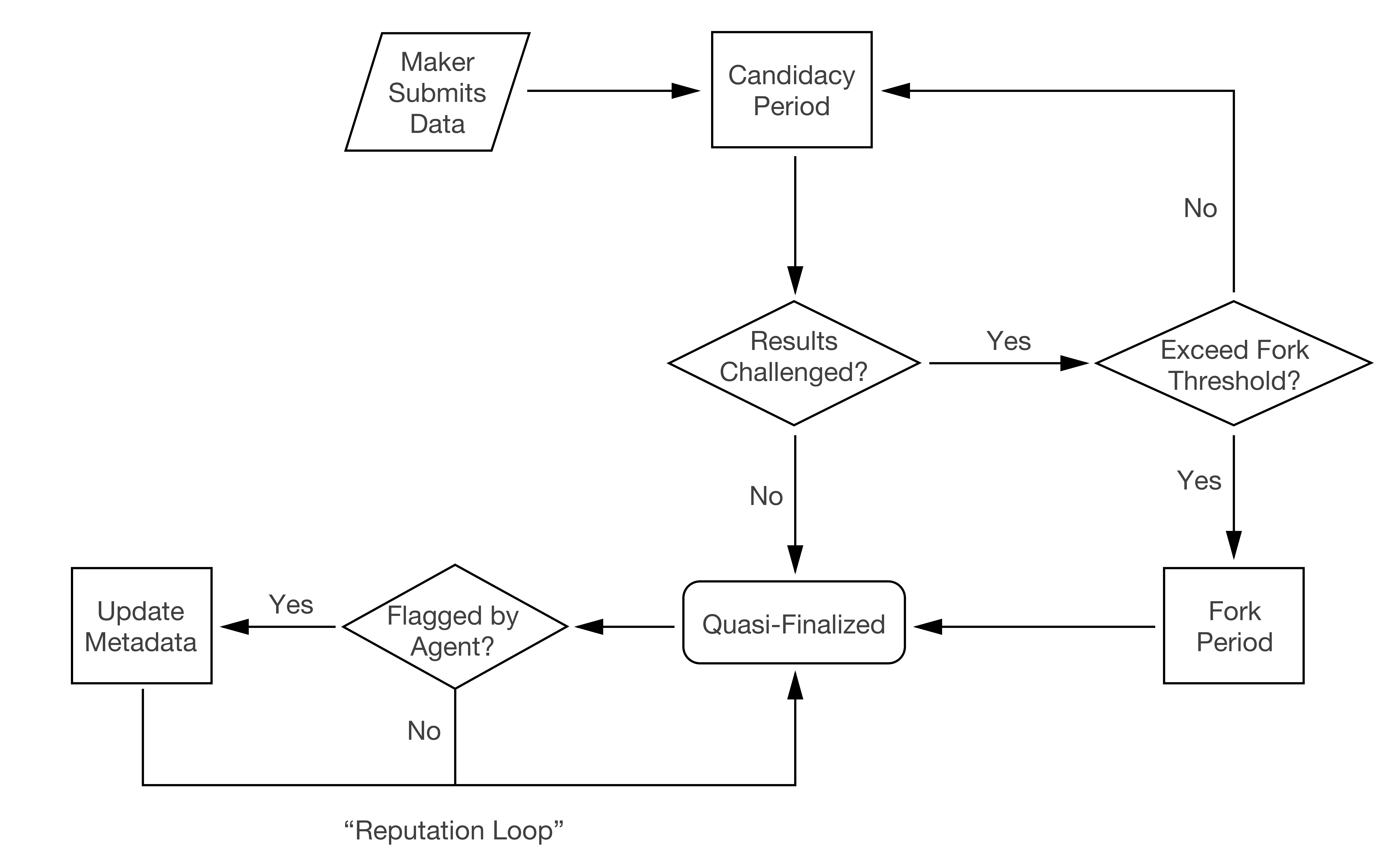}
  \caption{The workflow for agents participating in the construction of a tokenized data structure.}
  \label{fig:tcdiagram}
\end{figure*}

We now formally define each participant in the economy.

\begin{definition}[Token Holders]
A token holder is an agent $a_j$ which holds a nonzero number of tokens $\mathcal{L}(a_j) = t_j > 0$. The token holder's belief that token $t_j$ holds value (say $\#A$ units of value in a base currency) is the primary economic driver responsible for its participation in the construction of $\mathcal{TD}$.
\end{definition}

\begin{definition}[Maker]
A token holder who possesses token holdings $\mathcal{L}(a_j)$ in excess of a set minimum stake $S_\text{candidate}$ can propose that a new candidate element $e$ should be added to $\mathcal{TD}$. Such a modification to $\mathcal{TD}$ must be approved by a quorum $S_\text{candidateQuorum}$ of token holders during a voting period $t_\text{candidateVote}$. For example $S_\text{candidateQuorum} = \frac{2}{3}\text{size}(\mathcal{T})$ would set a quorum at $\frac{2}{3}$ of the token economy size.
\end{definition}

\begin{definition}[Querier]
The querier is any party who is interested in accessing the information stored in $\mathcal{TD}$. The querier must be prepared to pay to renumerate the token holders who have put forward the effort needed to curate the $\mathcal{TD}$.
\end{definition}

With these definitions in place, we can formalize the actions these agents may take.

\begin{definition}[Candidacy]
Let $\mathcal{TD} = \left ((e_1,\dotsc, e_N), \mathcal{T}, \mathcal{L}, \mathcal{M} \right )$ be a tokenized data structure. In a \textit{candidacy}, agent $a$ proposes the addition of element $e$ to $\mathcal{TD}$. Agent $a$ must place $S_{\text{candidate}}$ tokens at stake for the duration of the vote $t_\text{vote}$. If a quorum $S_{\text{quorum}}$ of the token holders recorded in $\mathcal{L}$ authorize the modification, element $e$ is added to the $\mathcal{TD}$ and agent $a$ is rewarded with reward $r$. The final structure is 
\begin{equation*}
\mathcal{TD}' = \left((e_1,\dotsc, e_N, e), \mathcal{T}, \mathcal{L}: \{a: \mathcal{L}(a) + r\}, \mathcal{M} : \{e: \textrm{candidate}\}\right )    
\end{equation*}
\end{definition}

In the definition above, we use the terminology $\mathcal{L} : \{a: t\}$ to mean that ledger $\mathcal{L}$ is modified so that $\mathcal{L}(a) = t$, and terminology $\mathcal{M} : \{e : \text{candidate}\}$ to denote that $\mathcal{M}$ is modified to hold metadata denoting $e$ as a candidate.

For makers $a$ proposing the addition of a leaf node $e$ holding value $v$, they will be responsible for storing the data off-chain since other nodes have the option of issuing a \textit{challenge} that can remove the element $e$ from the dataset. If agent $a$ can't produce $v$ upon query, token holders will be incentivized to vote for removal of $e$ from $\mathcal{TD}$.

An important special case for the candidate is the candidacy stake $S_\text{candidate}$ is set to $0$. This setting will be crucial for cases when constructing $\mathcal{TD}$ will be challenging and barriers for candidacy need to be low. For example, a tokenized dataset will likely require the minimum deposit for candidacy to be zero since otherwise it will be challenging to incentivize workers to participate in the project.

All elements that pass candidacy are said to be \textit{quasi-finalized}.

\begin{definition}[Quasi-Finality]
An element $e_i$ is \textit{quasi-finalized} when it has passed candidacy and its addition to the tokenized dataset has been approved by a quorum of token holders in $\mathcal{TD}$. 
Note that quasi-finalized elements may still be challenged by token holders. Upon being quasi-finalized, the metadata associated with $e_i$ is updated.
\begin{equation*}
\mathcal{TD}' = \left((e_1,\dotsc, e_N), \mathcal{T}, \mathcal{L}, \mathcal{M} : \{e_i: \textrm{quasi-finalized}\}\right )    
\end{equation*}
\end{definition}

Note that quasi-finalized elements may still be challenged by token holders. Initially, the proposing agent $a$ still has its candidate reward $r$ bonded to the $\mathcal{TD}$. This stake can be seized via challenge. If the challenge succeeds, reward $r_i$ will be seized from agent $a$ and the element $e_i$ will be flagged in the metadata $\mathcal{M}$ as successfully challenged.

\begin{definition}[Challenge]
Let $\mathcal{TD} = \left ((e_1,\dotsc, e_N), \mathcal{T}, \mathcal{L}, \mathcal{M} \right)$ be a tokenized data structure. In a \textit{challenge}, agent $a$ proposes the modification of metadata associated with element $e_i$ from $\mathcal{TD}$ to denote that this datapoint has been challenged. Agent $a$ must place $S_{\text{challenge}}$ tokens at stake for the duration of the vote. If a quorum $t_{\text{quorum}}$ of the token holders recorded in $\mathcal{L}$ authorize the modification, element $e_i$ is marked as challenged in $\mathcal{TD}$. Let $a'$ be the agent who originally added $e_i$. Then $a'$'s reward is seized and the final structure is 
\begin{equation*}
\mathcal{TD}' = \left((e_1,\dotsc, e_N) \setminus \{e_i\}, \mathcal{T}, \mathcal{L}:\{a': \mathcal{L}(a') - r\}, \mathcal{M}:\{e_i: \text{challenged} \}\right )    
\end{equation*}
\end{definition}

When the reward for $r$ for proposing agent $a$ is no longer bonded to the network, challenges can no longer seize the reward, and the ledger is not amended upon a successful challenge, but the associated metadata still is.

\begin{definition}[Statute Of Limitations]
An element $e_i$ has exceeded its \textit{statute of limitations} when the reward $r$ issued to its proposing agent $a$ is no longer bonded to the $\mathcal{TD}$. Challenges may still be issued against $e_i$ but it will no longer be possible to seize the candidate reward $r$.
\end{definition}

\subsection{Liveness of Data}
Since the data in leaf nodes may be stored off-chain, ensuring liveness of data is critical. Note that when a leaf node with data is accepted into a TD substructure, the leaf node owner is rewarded with minted substructure tokens.

Token holders for the substructure are incentivized to challenge leaf nodes to prove they can produce their data. If the challenge passes muster, the leaf node tokens are burned, implicitly raising the value of existing token holders' holdings. Recall that token holders are entitled to a fraction of future membership payments in proportion to their fractional ownership of the token supply. Hence, a token holder is incentivized to increase their fractional ownership (and gain rights to a larger share of future returns) by pruning dead leaf nodes.

\begin{theorem}[Proof of Liveness] Let $\mathcal{TD}$ be a tokenized dataset. Suppose that the current saleable value of $\mathcal{TD}$ is $\#D$ units in a base currency and that potential for $k$ future sales exists. Assume that a liveness check costs $\#c$ units. Then, token holders of a dataset are incentivized to check any element $e \in \mathcal{TD}$ a total of $\ell = \frac{kD}{cN}$ times for data liveness.\end{theorem}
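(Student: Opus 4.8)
The plan is to argue by a break-even (rational-expenditure) calculation: a risk-neutral coalition of token holders will spend on liveness checks for a fixed element $e$ up to, but not beyond, the point where the cumulative cost of checking equals the future revenue that keeping $e$ live secures for them.

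First I would pin down the value attached to a single element. Treating each of the $N$ elements as contributing equally to the saleable value $\#D$, element $e$ accounts for $\#(D/N)$ of the price of the dataset at each sale. With $k$ anticipated future sales, the total future revenue riding on $e$ remaining live is $\#(kD/N)$. If the owner of $e$ cannot produce its data on query, the Liveness of Data discussion and the Challenge definition show that $e$ is challenged out of $\mathcal{TD}$ (and its leaf tokens burned), so $\#(kD/N)$ is precisely the quantity token holders stand to lose by letting $e$ go dead, hence the value they are collectively willing to expend resources to protect.

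Next I would invoke rationality. Since each liveness check costs $\#c$, a coalition that has already run $m$ checks gains from running one more exactly when the protection bought still outweighs $\#c$; the coalition therefore keeps checking until the running total $\#(mc)$ of check costs reaches the value at stake $\#(kD/N)$. Setting $\ell c = kD/N$ gives $\ell = \frac{kD}{cN}$, the claimed count. I would note that the pro-rata return structure — each holder of $t_j$ of the $n$ tokens is entitled to fraction $t_j/n$ of future membership payments, following the Economy Size convention — is what makes the coalition's aggregate willingness-to-pay add up to the full $\#(kD/N)$, and that the option to seize or burn on a successful challenge only strengthens the incentive.

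The main obstacle is making ``incentivized to check $\ell$ times'' precise and true as an equality rather than a one-sided bound, which requires two hypotheses to be stated explicitly. First, there is a free-rider problem: an individual token holder bears the full $\#c$ of a check but internalizes only the fraction $t_j/n$ of the $\#(kD/N)$ it protects, so absent coordination or a challenge reward $S_\text{challengeReward}$ calibrated to internalize the externality, each individual under-checks; the figure $\ell = \frac{kD}{cN}$ is recovered only when one counts the union of all token holders' checks (or assumes such a reward). Second, the uniform-contribution assumption ($\#(D/N)$ per element) and the all-or-nothing detection model — a check that fires recovers the whole $\#(kD/N)$, so cumulative cost is compared against cumulative value rather than through a finer per-check probabilistic recovery analysis — are modeling choices that should be declared up front. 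With these in place the arithmetic is immediate; the real content of the theorem is the identification of $\#(kD/N)$ as the correct per-element value at risk, together with the break-even principle that converts it into a check count.
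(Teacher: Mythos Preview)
Your argument is correct and matches the paper's own proof essentially line for line: attribute $\#D/N$ of saleable value to each element, multiply by $k$ future sales to get $\#kD/N$ at stake per element, and equate that against the cumulative check cost $\ell c$ to obtain $\ell = kD/(cN)$. Your additional remarks on the free-rider problem and the uniform-contribution and all-or-nothing assumptions go beyond what the paper states, but they are sound caveats rather than deviations from the approach.
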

\begin{proof}
Let's suppose that $\mathcal{TD} = ((e_1,\dotsc,e_N), \mathcal{T}, \mathcal{L}, \mathcal{M})$. On average, each element $e_i$ has saleable value $\#D/N$ units. Given the potential for $k$ future sales, the future value of each data element is $\#kD/N$ units. It follows that a token holder is economically incentivized up to $\ell = \frac{kD}{cN}$ checks for data liveness.
\end{proof}

It's useful to substitute actual numbers to gain some intuition for this result. Let's suppose that $\mathcal{TD}$ has saleable value $D=\#1000$ units at present and that potential for $k=10$ future sales exists. Let's suppose that a liveness check costs $c=\#.1$ units and that the dataset has $N=10000$. Then a token holder is incentivized to issue $\ell=\frac{kD}{cN} = \frac{10*\#1000}{\#.1 * 10000} = 10$ liveness checks for \textit{each} datapoint in $\mathcal{TD}$.

\subsection{Accessing private data}

A tokenized data structure $\mathcal{TD}$ explicitly allows for the addition of private off-chain data. The introduction of this primitive raises new questions: How can an interested party honestly gain access to this off-chain data in a way that fairly rewards the token holders who have curated this resource? And more importantly, what are the new attack vectors that arise as a result of this new resource?

In this section, we will consider two potential access modes by which interested parties can access private, off-chain data. Namely, \textit{membership} and \textit{transactions}.

\subsubsection{Membership Model}

In the membership model, any interested agent who wishes to access private data must become a token holder who holds stake in $\mathcal{TD}$. Then requesting to query a private datapoint $v_i$ stored in element $e_i$ requires placing stake $S_\text{query}$ within the system. The process of acquiring $S_\text{query}$ stake in $\mathcal{TD}$ is referred to as the process of acquiring \textit{membership} in $\mathcal{TD}$.

\begin{definition}[Membership]
Agent $a$ acquires \textit{membership} in tokenized data structure $\mathcal{TD}$ by acquiring $S_\text{query}$ tokens in its token economy.
\end{definition}

How does agent $a$ acquire $S_\text{query}$ tokens? Let's assume that $S_\text{query}$ tokens holds $\#D$ units of value. The payment of $\#D$ tokens is then split out pro-rata (according to ownership share) among all present token holders.

When analyzing behavior, it will be useful to assume that the economy is in steady state, so that tokens are no longer being issued.

\begin{definition}[Steady State]
A tokenized data structure $\mathcal{TD}$ is in steady state if $\#(\mathcal{T})$ can no longer change.
\end{definition}

With the definitions of membership and steady state laid down, it becomes possible to analyze the expected rewards for honest and dishonest behavior. As before, let $\#(\mathcal{T})$ denote the number of tokens of type $\mathcal{T}$ available. We start with a useful definition.

\begin{definition}[Leakage Resistance]
Let $a$ be a token holder in tokenized data structure $\mathcal{TD}$ that has present market value $\#V$ units. Let's suppose that $a$ leaks private data from $\mathcal{TD}$ and that the post-leakage market value of $\mathcal{TD}$ is $\#\beta V$ units. Then we say that $\mathcal{TD}$ has \textit{leakage resistance} $\beta$.
\end{definition}

\begin{definition}[Counterfeit Worth]
Let $a$ be a token holder in tokenized data structure $\mathcal{TD}$ that has present market value $\#V$ units. Let's suppose that $a$ leaks private data from $\mathcal{TD}$ and that the market value of the leaked information from $\mathcal{TD}$ is $\#\gamma V$ units. Then we say that $\mathcal{TD}$ has \textit{counterfeit worth} $\gamma$.
\end{definition}

Leakage resistance $\beta$ range from $0$ to $1$. Different tokenized data structures will have different leakage resistance factors $\beta$ depending on the type of data they hold. Data for which establishing provenance is critical (perhaps for regulatory reasons as in health care) may have resistance factors close to $1.0$. Data for which provenance doesn't matter (perhaps quantitative trading datasets) may have resistance factors close to $0$.

\begin{theorem}[Rewards for Honest and Dishonest Behavior] Let $a$ be an agent who buys $S_\text{query}$ tokens from tokenized data structure $\mathcal{TD}$ in steady state for $\#D$ units of value. Let $\alpha = \frac{S_\text{query}}{\#(\mathcal{T})}$ be the fractional ownership of $\mathcal{TD}$ required for querying data. Let $\beta$ be the leakage resistance of $\mathcal{TD}$ and let $\gamma$ be the counterfeit worth. Suppose that potential for $k$ future membership sales and $\ell$ counterfeit sales exists and that a nonzero probability $p_\text{detect}$ exists for dishonest behavior to be detected by token holders in $\mathcal{TD}$. Then the expected value for honest behavior $E[a_\text{honest}] > 0$ and the expected value for dishonest behavior $E[a_\text{dishonest}] < 0$ if $\alpha < \frac{p_\text{detect} + (1-p_\text{detect})\gamma \ell}{(1-p_\text{detect})\beta k}$.
\end{theorem}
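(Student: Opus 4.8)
The plan is to evaluate $E[a_\text{honest}]$ and $E[a_\text{dishonest}]$ separately under the steady-state membership model, and then extract the condition on $\alpha$ from the single requirement $E[a_\text{dishonest}] < 0$; the bound on $E[a_\text{honest}]$ falls out for free along the way.

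\emph{The honest side.} Having paid $\#D$ for the fraction $\alpha = S_\text{query}/\#(\mathcal{T})$ of the token supply, an honest member is entitled, by the pro-rata rule of the membership model, to the fraction $\alpha$ of every subsequent membership payment. Modelling each of the $k$ anticipated future memberships as transacting at roughly the same price $\#D$, the honest member expects to collect about $\alpha k D$ units; since $\alpha, k, D > 0$ this already gives $E[a_\text{honest}] > 0$, and crediting the member with the use-value of the data it purchased only increases it.

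\emph{The dishonest side.} Here I would condition on detection. With probability $1 - p_\text{detect}$ the leak is missed, and the agent both realises the counterfeit proceeds (which scale with the counterfeit worth $\gamma$ and the number $\ell$ of counterfeit sales) and keeps collecting dividends, but now from a dataset whose market value has fallen to $\#\beta V$, so the future-membership stream is scaled by the leakage-resistance factor $\beta$ and contributes on the order of $\alpha\beta k D$ (equivalently $\beta k\,\#D$ after using $\#D = \alpha\,\#V$). With the complementary probability $p_\text{detect}$ the leak is caught, the query stake $S_\text{query}$ (worth about $\#D$) is forfeited and the dividend stream is lost. Writing $E[a_\text{dishonest}]$ as the $p_\text{detect}$-weighted sum of these two branches, dividing through by $\#V$ so every term is measured as a fraction of the token economy, and imposing $E[a_\text{dishonest}] < 0$, I would gather the degraded-dividend term (proportional to $(1-p_\text{detect})\beta k\,\alpha$) on one side and the seizure penalty (proportional to $p_\text{detect}$) together with the counterfeit term (proportional to $(1-p_\text{detect})\gamma\ell$) on the other, and solve the resulting linear inequality for $\alpha$; this yields a threshold of the same shape as $\frac{p_\text{detect} + (1-p_\text{detect})\gamma\ell}{(1-p_\text{detect})\beta k}$.

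\emph{Where the difficulty lies.} The calculations themselves are routine; the real work is fixing the bookkeeping conventions that make the two sides line up, and in particular making $E[a_\text{dishonest}]$ genuinely \emph{negative} rather than merely smaller than $E[a_\text{honest}]$ — which forces the leaker to not be credited with the use-value of the data it accessed, so that the $\#D$ spent on membership counts as a pure cost against the counterfeit and dividend income. Beyond that one must commit to (i) whether the counterfeit proceeds survive detection or are clawed back, (ii) whether a leak degrades future memberships by lowering their price to $\#\beta D$ or by shrinking the buyer pool while the price stays $\#D$, and (iii) which of the interchangeable value scales $\#D$ and $\#V = \#D/\alpha$ each term is most naturally written in, since the power of $\alpha$ in the final bound (and hence the direction of the inequality) is sensitive to that choice. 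It is also worth stating the steady-state hypotheses up front: that $\#(\mathcal{T})$ is frozen, that the agent's share stays $\approx \alpha$ as later members enter, and that future memberships clear at a price comparable to the $\#D$ originally paid. Once these are pinned down, $E[a_\text{dishonest}] < 0$ rearranges in one line to the stated condition, and $E[a_\text{honest}] > 0$ is immediate from the honest-side computation.
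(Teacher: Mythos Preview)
Your plan is exactly the paper's argument: it computes $E[a_\text{honest}] = \alpha k\,\#D > 0$ from the pro-rata share of $k$ future memberships, then sets $E[a_\text{dishonest}] = -p_\text{detect}\,\#D + (1-p_\text{detect})(\beta\alpha k + \gamma\ell)\,\#D$ by conditioning on detection and solves the resulting linear inequality for $\alpha$. The only difference is that the paper silently fixes the bookkeeping conventions you carefully enumerate (it works in units of $\#D$, credits counterfeit income only in the undetected branch, and treats detection as pure loss of the $\#D$ stake), so your discussion of those choices is more explicit than, but substantively identical to, what the paper does.
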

\begin{proof}
Let suppose that agent $a$ pays $\#D$ units to obtain stake $S_\text{query}$ in $\mathcal{TD}$. Then let's assume that there are $k$ sales that will happen in the future. Then the total future return that $a$ can expect is $\#\alpha kD$ units of value. Put another way, the expected reward for honest behavior $E[a_\text{honest}] = \#\alpha k D > 0$ units of value.

If $a$ is dishonest and leaks information, there are two possible outcomes. The first is that the dishonesty is caught and a challenge is issued to $a$ causing loss of stake. This outcome has value $-\#D$ units. Let's assume alternatively that $a$'s dishonest behavior is not caught. In this case, the value of data visibility will drop to $\#\beta D$ units since the leakage resistance of $\mathcal{TD}$ is $\beta$. The data will also have counterfeit value $\# \gamma D$ units. In this case, the expected return is $\#(\beta \alpha k + \gamma \ell) D$ units of value. Then the expected return for dishonest behavior is $E[a_\text{dishonest}] = -p_{\text{detect}} \#D + (1-p_\text{detect}) \#(\beta \alpha k + \gamma \ell) D$. This quantity is negative if and only if
\begin{align*}
    E[a_\text{dishonest}] = -p_{\text{detect}} \#D + (1-p_\text{detect}) \#(\beta \alpha k + \gamma \ell) \#D &< 0 \\
     -p_\text{detect} + (1 - p_\text{detect})(\beta \alpha k + \gamma \ell) &< 0 \\
     \alpha &< \frac{p_\text{detect} + (1-p_\text{detect})\gamma \ell}{(1-p_\text{detect})\beta k}
\end{align*}
\end{proof}

This result is a little curious. It indicates that while increasing the required fractional ownership $\alpha$ for querying private data increases the rewards for honest behavior, it can also create positive rewards for dishonest behavior if $\mathcal{TD}$ is leakage resistant and has low counterfeit worth. In these cases, malicious parties can freely leak while still enjoying positive returns. These results suggest that designing resilient economies for tokenized data structures may take significant research to do correctly.

\subsubsection{Transaction Model}

In this model, agents who wish to gain access to data would pay a direct fee $\#F$ units to all token holders in $\mathcal{TD}$ but would not gain ownership in the the tokenized data structure. The weakness of this model is that the expectation of future returns from the dataset now no longer constrains the behavior of purchasing agents. Without this positive reward for honest behavior, leaks will become more likely and destroy the value of $\mathcal{TD}$.

\subsection{Future Returns}

Constructing a new tokenized data structure $\mathcal{TD}$ can take a significant amount of effort. What motivates a potential contributor to put forward this effort? Simply put, the contributor will make this effort if the expected monetary reward for the effort is positive.

\begin{theorem}[Expected Future Returns]
Let us suppose that $\#E$ units of capital must be expended for agent $a$ to obtain and store element $e$. Let $\mathcal{TD}$ be a tokenized data structure in steady state. Let's suppose that in the future, a total of $k$ agents will be interested in obtaining membership in $\mathcal{TD}$ for $\#D$ units value each. Then the expected return $E[a_\text{contribution}]$ for contributing $e$ to tokenized data structure $\mathcal{TD}$ with candidate reward $S_\text{candidateReward}$ is $\#\alpha k D - \#E$ where $\alpha$ is the fractional ownership of $a$ in $\mathcal{TD}$.
\end{theorem}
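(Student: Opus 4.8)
The plan is to mirror the accounting argument used in the proof of the Rewards for Honest and Dishonest Behavior theorem: the contribution decision has exactly one up-front cost and one stream of future benefits, and the claim is just that the net of these two is $\#\alpha k D - \#E$.

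First I would account for the cost side. By hypothesis, agent $a$ must expend $\#E$ units to obtain and store the leaf node $e$ before submitting it as a candidate, so the contribution carries a deterministic cost of $-\#E$ units. Next I would account for the benefit side. Once $e$ clears candidacy, $a$ receives the candidate reward $S_\text{candidateReward}$ and thereby holds a fractional share $\alpha = S_\text{candidateReward}/\#(\mathcal{T})$ of the token supply. This is where the steady-state hypothesis enters: since $\#(\mathcal{T})$ can no longer change, $\alpha$ is not subsequently diluted by other makers' rewards, so it is a well-defined constant over the relevant horizon. By the membership model, each of the $k$ future agents who acquires membership pays $\#D$ units, which is distributed pro-rata among current token holders; hence $a$ collects $\#\alpha D$ from each such sale. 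Summing over the $k$ sales gives an expected inflow of $\#\alpha k D$ units. Combining, $E[a_\text{contribution}] = \#\alpha k D - \#E$, as claimed.

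The main subtlety — and essentially the only place the argument needs care — is the treatment of $\alpha$ relative to the steady-state assumption: strictly speaking, issuing $S_\text{candidateReward}$ to $a$ perturbs $\#(\mathcal{T})$, so I would frame $\alpha$ as $a$'s \emph{post-finalization} ownership fraction and invoke steady state only for the interval over which the $k$ membership sales occur, so that no further dilution happens once $a$ is in. A secondary modeling point is that the statement implicitly treats the candidacy as succeeding (and the reward as surviving any statute-of-limitations challenge window); I would flag this explicitly rather than introduce a detection-probability factor as in the previous theorem, since the stated conclusion carries no such term.
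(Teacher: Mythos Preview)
Your proposal is correct and follows essentially the same accounting as the paper's own proof: it too sets $\alpha = S_\text{candidateReward}/\#(\mathcal{T})$, multiplies by $kD$ for the future membership revenue, and subtracts the cost $\#E$. Your added remarks on steady state and on implicitly assuming successful candidacy are more careful than the paper, which states the two-line computation without those caveats.
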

\begin{proof}
The fractional ownership that $a$ receives in $\mathcal{TD}$ is $\alpha = \frac{S_\text{candidateReward}}{\#(\mathcal{T})}$. Then the expected future return that $a$ will receive for its work is $\#\alpha k D$ leading to expected return $\# \alpha k D - E$.
\end{proof}

This theorem provides a corollary that guides how high $D$ must be priced for contributions to be encouraged.

\begin{corollary}[Data Pricing]
Let us suppose that $\# E$ units of capital must be expended for any agent to obtain an element $e$ suitable for tokenized data structure $\mathcal{TD}$. Then price $\#D$ for the dataset must be set greater than $\frac{E\#(\mathcal{T})}{kS_\text{candidateReward}}$ for $a$ to have positive expected return on its contribution.
\end{corollary}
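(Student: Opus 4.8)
The plan is to derive the corollary directly from the Expected Future Returns theorem, which already supplies the expected return $\#\alpha k D - \#E$ for agent $a$ upon contributing element $e$, together with the identification $\alpha = \frac{S_\text{candidateReward}}{\#(\mathcal{T})}$ of $a$'s fractional ownership obtained through the candidate reward. First I would invoke that theorem verbatim to reduce the question ``when does $a$ have strictly positive expected return?'' to the single scalar inequality $\#\alpha k D - \#E > 0$, i.e. $\alpha k D > E$ after stripping the common monetary-unit annotation from both sides.

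Next I would substitute $\alpha = \frac{S_\text{candidateReward}}{\#(\mathcal{T})}$ into this inequality to obtain $\frac{k S_\text{candidateReward}}{\#(\mathcal{T})}\, D > E$. Since $k \geq 1$ and $S_\text{candidateReward} > 0$ (a mining schedule with a strictly positive candidate reward is precisely the regime in which a contributor is issued tokens, cf. the token issuance discussion), the coefficient $\frac{k S_\text{candidateReward}}{\#(\mathcal{T})}$ is strictly positive, so I may divide both sides by it without reversing the inequality, yielding $D > \frac{E\,\#(\mathcal{T})}{k S_\text{candidateReward}}$, which is exactly the claimed lower bound on the data price $\#D$.

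I do not anticipate a genuine obstacle: the content is a one-line rearrangement of an already-proved theorem, so the only care required is bookkeeping — keeping the $\#$ monetary-unit tag consistent on both sides (it cancels, since the inequality compares two quantities in the same base currency) and recording the mild hypothesis $k S_\text{candidateReward} > 0$ needed to divide. If one wanted to be thorough, the one remaining \emph{soft} point is the observation that this threshold is tight: at equality the expected return is exactly zero, so strictly exceeding $\frac{E\,\#(\mathcal{T})}{k S_\text{candidateReward}}$ is both necessary and sufficient for a strictly positive incentive to contribute, which is why the corollary is stated with a strict inequality.
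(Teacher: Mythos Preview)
Your proposal is correct and follows essentially the same approach as the paper: invoke the Expected Future Returns theorem to get $\alpha k D - E > 0$, substitute $\alpha = \frac{S_\text{candidateReward}}{\#(\mathcal{T})}$, and solve for $D$. The paper's proof is just the bare chain of inequalities without your added remarks on positivity of the divisor and tightness, but the substance is identical.
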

\begin{proof}
\begin{align*}
    E[a_\text{contribution}] &> 0 \\
    \alpha k D - E &> 0 \\
    \frac{S_\text{candidateReward} k D}{\#(\mathcal{T})} - E &> 0 \\
    D &> \frac{E\#(\mathcal{T})}{kS_\text{candidateReward}} \\
\end{align*}
\end{proof}

Note that pricing depends on whether $\#(\mathcal{T})$ is a static or dynamic quantity. For inflationary token economics, where $\#(\mathcal{T})$ grows larger with time, the required price for $D$ will grow larger with time as well.

\subsection{Forking a $\mathcal{TD}$}

\begin{definition}[Forks]
A fork is an operation which proposes splitting a given $\mathcal{TD}$ into two separate $\mathcal{TD}_1$ and $\mathcal{TD}_2$ structures. The elements $((e_1,\dotsc,e_N), \mathcal{T}, \mathcal{L}, \mathcal{M})$ must be divided (without overlap) between the two children structures. This means in particular that the two child ledgers cannot intersect $\mathcal{L}_1 \cap \mathcal{L}_2 = \emptyset$
\end{definition}

An agent who wishes to trigger a fork must place $S_\text{forkDeposit}$ at stake. This deposit triggers a forking period. All tokens holders on a registry have time $t_\text{forkVote}$ to adopt one of the two forked registries.
Adoption of one registry means token holdings on the other registry are destroyed.

The recursive nature of tokenized data structures introduces an interesting complicated though. Let's suppose that a given $\mathcal{TD}$ contains element $e_1$ which is itself a tokenized data structure $\mathcal{TD}_1$. Let's say that a fork is triggered for $\mathcal{TD}_1$ which splits the data structure into $\mathcal{TD}_{11}$ and $\mathcal{TD}_{12}$. By convention, let's agree that $\mathcal{TD}_{11}$ is the direct offshoot and $\mathcal{TD}_{12}$ is the forked variant. Note then that $\mathcal{TD}_{12}$ is not yet an element of $\mathcal{TD}$! The token holders in $\mathcal{TD}_{12}$ will need to apply for candidacy in $\mathcal{TD}$. This extra candidacy step place an additional hurdle to discourage frivolous forks.

\subsection{Slashing Conditions}

Slashing conditions for tokenized data structures are implicitly implemented via the challenge mechanism. This implicit scheme can be significantly more robust than an automated slashing condition since there need not be a simple algorithmic rule for slashing. Human (or intelligent agent) token holders can issue challenges for arbitrary reasons including suspicion of fraud that is hard to prove with a rigid algorithmic condition.

\subsection{Token Valuations}

A tokenized data structure $\mathcal{TD}$ depends critically on its associated token $\mathcal{T}$. What is the economic value of such a token in equilibrium conditions? We have discussed the economic rewards that accrue to token holders at depth already. In this model, the value of the token is directly proportional to the future economic rewards that will accrue to a token holder. Let's suppose that a total of $\#R$ units of discounted future economic rewards will accrue to $\mathcal{TD}$. Then the unit value of a token should be $\#R/\#(\mathcal{T})$. This simple heuristic provides justification for why a data structure token has value, but more refined analysis is left to future work.

\begin{comment}
\section{Open Design Questions}

\begin{itemize}
\item Challenge Escalation: If $\text{Challenge} == \text{True}$ and $\text{challengeDeposit} < \text{forkThreshold}$, then re-enter a voting period similar to the initial Data Curation workflow above. Challenges are allowed with escalating challengeDeposit amounts until $\text{challengeDeposit} > \text{forkThreshold}$.
\end{itemize}
\end{comment}

\section{Attacks}

In this section, we consider a number of possible attacks upon tokenized data structures. We discuss the severity of each form of attack and consider potential mitigation strategies.

\subsection{Dilution of Token Economies with Depth}
The recursive definition of a tokenized data structure means that nesting can go arbitrarily deep. As a result, tokens generated for nested substructures of $\mathcal{TD}$ can have very small economies. For this reason, such substructures may be especially prone to other attacks. This is related to the minimum economy size problems identified in the TCR paper \cite{goldin2017tcr}, which suggests that TCRs may not be well suited for small economy problems such as generating grocery lists. In the case of a decentralized data exchange, smaller exchanges that are deeply nested within the broader tokenized data market may not have sufficient economic protections to discourage attacks.

\subsection{Trolling Attacks}
The TCR paper \cite{goldin2017tcr} identifies trolling attacks as a class of vulnerabilities. In this attack, trolls are actors who are willing to attack a system individually to poison gathered data. Such trolls seek to maximize chaos, but are usually not willing to suffer large personal losses to do so. The requirement for a stake to propose TCR candidates means that trolling attacks should be relatively ineffective since the loss of stake resulting from challenes could make such attacks expensive.

For tokenized data structures, it is possible that trolling attacks could prove more dangerous. As we have discussed above, it may make sense to set $S_\text{candidateDeposit}$ to $0$ in order to lower barriers for potential contributors to $\mathcal{TD}$. In such a case, the economic barriers against trolling attacks no longer pose a barrier. However, if token holders can algorithmically detect troll-submitted candidates with low effort, then the severity of such attacks may be mitigated.

\subsection{Madman Attacks}
The TCR paper \cite{goldin2017tcr} introduces madman attacks, where motivated adversaries are willing to undergo economic losses to poison a registry. For example, a corporation or nation-state may seek to thwart the construction of particular data structure which could hurt its interests. Such adversaries may be willing to pay large sums to thwart the construction of such structures.

Defenses against madman attacks are limited by the size of the economy. For tokenized data structures with large economies, such attacks will be prohibitively expensive, but for smaller economies, these attacks will likely prove damaging. These attacks could prove challenging for decentralized data exchanges, where existing data brokers could be motivated to attack decentralized datasets that challenge their market position. Future work needs to consider how to mitigate such attacks.

\subsection{Sybil Attacks}

Token holders can use multiple coordinated accounts to game a tokenized data structure. In this section, we discuss a few possible such attacks and their effects.

\subsubsection{Data duplication}
A token holder can propose the addition of data element $e$ to $\mathcal{TD}$. If the data element is valid, this will merit a reward $r$ issued to the proposer. Once the reward bonding period is complete, the token holder can use a second account to challenge the liveness of $e$ and purposefully fail the liveness challenge. At this point, the metatdata for $e$ would be modified to note that it has been challenged. The proposer can use a new account to propose re-adding $e$ to $\mathcal{TD}$. Performed iteratively, this scheme could repeatedly gain rewards for the same datapoint.

To defend against the attack, the creator of $\mathcal{TD}$ can choose to make the reward bonding period large. Alternatively, parameters could be set so refreshing a previously challenged element might earn a much smaller reward. This modification could close off the duplication attack, but might lower incentives for agents to refresh dead data. 

It's also likely that many tokenized data structure creators will require their makers to provide proof of their identity. Known identities will create another layer of accountability that will make duplication attacks more challenging.

\subsection{Data Leakage}

The data stored off-chain on $\mathcal{TD}$ will likely leak over time as the number of agents who have accessed the dataset increase. We have argued in the membership model that purposeful leakage is economically disincentivized, but it's likely that residual leakage will happen over time. It remains an open problem to construct a membership model that will minimize leakage over long time periods.

\subsection{Forking Attacks}

A malicious agent could seek to trigger adversarial forks of $\mathcal{TD}$ in order to gain additional control. However, frivolous forks will likely not gain broad backing from the token holders of $\mathcal{TD}$ so even if a fork is triggered, the offshoot branch will have a much smaller economy. This will limit the potential economic gain for the malicious agent.

\section{Discussion}

Decentralized data markets might prove very useful for the development of intelligent agents. For example, a deep reinforcement learning \cite{mnih2015human} or evolutionary agent with a budget could algorithmically construct a tokenized dataset $\mathcal{TD}$ to solicit the construction of a dataset needed to further train the existing model. Significant research progress in deep reinforcement learning has resulted in the design of agents that can learn multiple sets of skills \cite{mankowitz2018unicorn}, so it seems feasible for an agent to learn how to budget dataset gathering requests. More prosaically, data scientists and researchers can access the decentralized data exchanges to find datasets thay may prove useful for their work. Access to data liquidity could prove a powerful tool for democratization of machine learning and AI models

In tokenized schemes, it's common to ask whether the token is a necessary part of the design. Couldn't the same design be constructed using an existing token such as ETH or BTC? It does in fact seem likely that a tokenized data structure can be meaningfully constructed with all stakes placed in ETH for example. However, it's not possible to issue recursive sub-tokens if we insist on ETH stakes; requiring contributors to a $\mathcal{TD}$ to front significant capital makes it unlikely that they will participate. For this reason, we suspect that tokenized data structures without custom tokens will face major challenges constructing nontrivial data structures.

In the present work, we have limited our analysis to a mathematical presentation of the properties of tokenized data structures. We leave for future work the nontrivial challenge of implementing tokenized data structures on an existing smart contract platform such as Ethereum \cite{buterin2013ethereum}.

\section{Conclusion}

In this work, we demonstrate how to construct a decentralized data exchange. This construction is built upon the primitive of tokenized data structures. Such tokenized data structures combine the strengths of past work on token curated registries \cite{goldin2017tcr} and distributed hash tables \cite{stoica2003chord} to provide a framework for constructing incentivized data structures capable of holding off-chain, private data. In addition, tokenized data structures introduce the notion of recursive sub-tokens to incentivize contributors. We provide a mathematical framework for analyzing tokenized data structures and prove theorems that show that participants in a tokenized data structure $\mathcal{TD}$ are incentivized to construct the data structure for positive expected rewards. We discuss how these incentives allow for the construction of robust decentralized data markets. We conclude by discussing how such decentralized data markets could prove useful for the future development of machine learning and AI.

It's worth noting that our theorems don't prove Byzantine Fault Tolerance of tokenized data structures against adversaries. Rather they provide much weaker guarantees that honest participants will benefit from participating. We provide qualitative arguments why tokenized data structures are robust against some classes of adversarial attacks, but a more rigorous formal treatment is left to future work.

\medskip
\printbibliography
\end{document}